\documentclass[conference]{IEEEtran}
\IEEEoverridecommandlockouts

\usepackage{cite}
\usepackage{amsmath,amssymb,amsfonts}
\usepackage{amsthm}
\usepackage{algorithmic}
\usepackage{graphicx}
\usepackage{textcomp}
\usepackage{xcolor}
\usepackage{booktabs}
\usepackage[colorlinks=true,allcolors=blue]{hyperref}
\hypersetup{pdftitle={Quantifying the Relay-Coverage Trade-off in Mobile Multihop Flooding},
  pdfauthor={Johannes F. Loevenich, Thies M\"ohlenhof, Roberto Rigolin F. Lopes}}
\usepackage{balance}  

\usepackage[acronym,nomain]{glossaries}
\glsdisablehyper
\setacronymstyle{long-short}
\glsnoexpandfields
\newcommand*{\olsrvtwolong}{\ifglsused{olsr}{\glsentryshort{olsr}}{\glsentrylong{olsr}} version~2}
\newacronym{olsr}{OLSR}{Optimized Link State Routing}
\newacronym{olsrv2}{OLSRv2}{\olsrvtwolong}
\newacronym{nhdp}{NHDP}{Neighborhood Discovery Protocol}
\newacronym{smf}{SMF}{Simplified Multicast Forwarding}
\newacronym{ecds}{E-CDS}{Essential Connected Dominating Set}
\newacronym{mpr}{MPR}{multipoint relay}
\newacronym{tc}{TC}{Topology Control}
\newacronym{rfc}{RFC}{Request for Comments}
\newacronym{ip}{IP}{Internet Protocol}
\newacronym{ci}{CI}{confidence interval}
\newacronym{cpsat}{CP-SAT}{constraint programming--satisfiability}

\theoremstyle{plain}

\newtheorem{proposition}{Proposition}
\newtheorem{corollary}{Corollary}

\theoremstyle{definition}
\newtheorem{definition}{Definition}

\newcommand{\Nl}{N_1}
\newcommand{\Nt}{N_2}

\newcommand{\rcov}{r}
\newcommand{\hold}{L}

\begin{document}

\title{
    Quantifying the Relay-Coverage Trade-off in Mobile Multihop Flooding
}

\author{\IEEEauthorblockN{Johannes F. Loevenich\IEEEauthorrefmark{1}\IEEEauthorrefmark{2},
Thies M\"ohlenhof\IEEEauthorrefmark{1},
Roberto Rigolin F. Lopes\IEEEauthorrefmark{1}}
\IEEEauthorblockA{\IEEEauthorrefmark{1}Thales Deutschland, cortAIx Labs, Germany\qquad \\
\IEEEauthorrefmark{2}Institute of Computer Science, University of Osnabr\"uck, Germany\\
\{johannes.loevenich, thies.moehlenhof, roberto.rigolin\}@thalesgroup.com}}

\maketitle

\begin{abstract}
Network-wide flooding in mobile multihop networks selects relays from locally maintained topology information. Mobility makes this information outdated, leaving two basic responses: add redundant relays or refresh the topology and re-select relays more frequently.
%
How these mechanisms trade reliability for overhead, however, is not well quantified.
%
We quantify this trade-off on a public trace of 157 vehicles using \gls{olsrv2} and exact minimum-cost relay sets.
%
Twofold coverage increases median relay-cover lifetime by only $2.3\times$, rather than about an order of magnitude as predicted by an independent-hazard model.
%
The limitation is structural: $43.6\,\%$ of two-hop neighbours have only one possible relay, and newly appearing neighbours invalidate existing covers. Moreover, the delivery gain of a second relay falls from $0.061$ to $0.016$ when relay selection uses instantaneous link state, showing that redundancy mainly compensates for outdated topology information.
%
Relay coverage and topology refresh therefore form a measurable redundancy--freshness trade-off rather than independent robustness mechanisms.
%
In our trace, twofold coverage has lower overhead below about $100\,\mathrm{bit/s}$ of application flooding per node, whereas fresher topology information has lower overhead above it. This makes relay redundancy a load-dependent design choice rather than a fixed heuristic robustness setting.
\end{abstract}

\begin{IEEEkeywords}
mobile multihop networks, network-wide broadcast, relay selection, topology freshness, control overhead, mobility, redundancy
\end{IEEEkeywords}
\glsresetall

\section{Introduction}
\label{sec:intro}

Network-wide broadcast supports route discovery and control dissemination in infrastructureless wireless networks, but naive flooding wastes capacity through redundant transmissions and collisions~\cite{ni1999broadcaststorm}. Relay-based schemes instead select one-hop neighbours that collectively reach the two-hop neighbourhood, as in \gls{olsr}/\gls{olsrv2}~\cite{qayyum2002mpr,clausen2003olsr,clausen2014olsrv2}, \gls{smf}~\cite{macker2012smf}, and dominating-set broadcast~\cite{dai2006kconnected}. Mobility creates a common failure mode: the topology changes while the local information used to choose relays ages.

A protocol can tolerate this mismatch in two ways. It can add relay redundancy, spending forwarding capacity so that outdated choices remain useful, or maintain a fresher topology view, spending control capacity so obsolete choices are replaced sooner. \gls{olsr} exposes the former through \texttt{MPR\_COVERAGE} and explicitly suggests increasing coverage under mobility~\cite[\S16]{clausen2003olsr}. Yet the exchange between these mechanisms, how much robustness a second relay buys, why, and when that cost is preferable to fresher topology information, has not been quantified on a realistic mobility trace (Section~\ref{sec:related}).

The answer need not follow independent-backup intuition. Some two-hop neighbours have only one possible relay, new neighbours can invalidate an otherwise intact cover, and a router can continue to believe in a link after it breaks. Redundancy may therefore be limited by local topology and may act less as protection against instantaneous relay loss than as insurance against outdated information.

We measure this exchange on a public trace of 157 vehicles, using \gls{olsrv2}/\gls{nhdp} as the primary instance because its neighbourhood state and forwarding semantics are explicit. An exact minimum-cost selector provides a yardstick against heuristic selector quality; \gls{smf} and collision arms test whether the mechanism survives a different relay rule and intra-flood contention.

The measurements show that twofold coverage extends median cover lifetime by only $2.3\times$, not by an order of magnitude. More importantly, most of its delivery gain disappears when relay selection receives current link state. Redundancy and topology freshness are therefore alternative ways of spending overhead against the same topology uncertainty, and application flooding load determines which is cheaper.

This paper makes three contributions:
\begin{itemize}
\item \textbf{Trade-off.} We quantify the reliability--overhead exchange between relay redundancy and topology freshness under mobility.
\item \textbf{Mechanism.} We show that structural coverability limits redundancy and that its delivery gain mainly compensates for outdated topology belief.
\item \textbf{Design rule.} We identify the application-load crossover at which fresher topology becomes cheaper than twofold relay coverage; the underlying mechanism also appears with \gls{smf}.
\end{itemize}
Together, these results turn relay redundancy from a fixed robustness heuristic into a measurable, load-dependent design choice for mobile multihop flooding.

\section{Related Work}
\label{sec:related}

\emph{Relay flooding.} Relay-based broadcast reduces the broadcast storm by restricting forwarding to a backbone. Multipoint relaying lets each node select forwarders from its one-hop neighbourhood~\cite{qayyum2002mpr,clausen2014olsrv2}; \gls{smf}'s \gls{ecds} instead lets nodes self-elect from their two-hop views~\cite{macker2012smf}; connected dominating sets and gossiping provide structural or probabilistic alternatives~\cite{dai2006kconnected,haas2006gossip}. These schemes differ in how relays are chosen, but all face the same mobility problem: forwarding decisions are based on a local view that can age before the next flood.

\emph{Redundant coverage.} \gls{olsr}'s \texttt{MPR\_COVERAGE}$=m$ asks that every two-hop neighbour be reached through $m$ \glspl{mpr} \textit{``if possible''} and gives poorly covered nodes all available coverers~\cite{clausen2003olsr}; \gls{olsrv2} no longer exposes this parameter. Double coverage has been studied as protection against lossy links~\cite{lou2007double}, while range buffers protect against outdated neighbour views~\cite{wu2005guaranteed}. Nilsson and Sterner compare additional \glspl{mpr} with fading margins or hysteresis and find that the preferred robustness mechanism depends on topology density~\cite{nilsson2012robust,nilsson2015rural}. This establishes that redundancy can help, but not how much temporal slack a second relay actually buys or when its recurring forwarding cost is preferable to a fresher view.

\emph{Stable selection and topology freshness.} A complementary line tries to make a chosen backbone last longer by preferring stable links, predicting link lifetime, using mobility models, or retaining \glspl{mpr} across runs~\cite{obilisetty2005lsolsr,ouacha2013reliable,li2017smlr,maccari2012reduce}. These approaches choose better relays from the information currently available. Separately, HELLO intervals and hold times have been tuned to make that information more current~\cite{huang2006tuning,ernst2012adaptive,jovel2017holdtimes}. The two lines therefore act on different sides of the same mismatch: one tolerates or avoids topology change, the other reduces uncertainty about it.

\emph{Gap.} Prior work studies redundant coverage, longer-lived relay selection, and neighbourhood refresh largely as separate mechanisms. We instead measure their common exchange: how much outdated-topology error can be tolerated by additional relay coverage, how structural coverability limits that protection, and at what application load refreshing the topology becomes cheaper than forwarding through a redundant backbone.

\section{System Model}
\label{sec:model}

The key distinction is between the true mobile topology and the local topology belief on which relay decisions are made. At time $t$, let $G_t=(V,E_t)$ be the true graph, and let $\Nl^t(v)$ and $\Nt^t(v)$ denote node $v$'s one-hop and strict two-hop neighbours; members of $\Nt^t(v)$ are \emph{terminals}. Node $v$ selects relays $R_t(v)\subseteq\Nl^t(v)$.

\begin{definition}[$\rcov$-cover]
\label{def:rcover}
$R_t(v)$ is an \emph{$\rcov$-cover} if every terminal $u\in\Nt^t(v)$ is adjacent to at least $\rcov$ relays in $R_t(v)$. Standard coverage has $\rcov=1$.
\end{definition}

Let $\widehat G_t(v)$ be $v$'s believed two-hop topology. Under perfect knowledge it equals the true local topology; under \gls{nhdp}~\cite{clausen2011nhdp}, periodic HELLOs update it and advertised entries persist for a hold time $T_{\mathrm{hold}}$. Relay sets are selected on $\widehat G_t(v)$ but used on $G_t$. Under \gls{olsrv2} forwarding~\cite{clausen2014olsrv2}, a relay can act for a selector only after that selection has been advertised, so outdated links, outdated membership, and relay-flag lag can all affect delivery.

The two control levers spend different resources. Increasing $\rcov$ can increase the number of nodes that forward each application flood; decreasing the HELLO interval $H$ or $T_{\mathrm{hold}}$ increases control traffic but makes $\widehat G_t(v)$ track $G_t$ more closely. We quantify the reliability–overhead exchange between redundancy and topology freshness and identify when each mechanism is the lower-cost choice.

\section{The Redundancy--Lifetime Exchange}
\label{sec:theory}

Before introducing topology-belief error, we derive the lifetime benefit expected if covering links fail independently and redundant cover is uniformly available. This gives a reference model for the measurements.

\begin{proposition}[Coverage lifetime]
\label{thm:lifetime}
Let covering links fail independently with hazard $\lambda$, and let a relay set reach $m_a$ of the $m$ terminals exactly $a$ times. To first order in $\lambda\hold\ll1$, the time $\hold$ by which some terminal is uncovered with probability $\varepsilon$ satisfies
\[
\sum_a m_a(\lambda\hold)^a\approx\varepsilon.
\]
For a uniform $\rcov$-cover relative to a 1-cover,
\[
\frac{\hold_\rcov}{\hold_1}=\left(\frac{m}{\varepsilon}\right)^{1-1/\rcov}.
\]
\end{proposition}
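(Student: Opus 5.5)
We model each terminal $u$ covered exactly $a$ times as having $a$ covering links, each of which has failed by time $\hold$ independently with probability $p=1-e^{-\lambda\hold}=\lambda\hold+O((\lambda\hold)^2)$. Terminal $u$ becomes uncovered only when all $a$ of its covering links have failed. This happens with probability $p^a=(\lambda\hold)^a(1+O(\lambda\hold))$.

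The event ``some terminal is uncovered'' is the union of these per-terminal events. We bound it from above by the union bound and from below by Bonferroni, subtracting the pairwise intersections. Each pairwise term is of higher order in $\lambda\hold$ than the single terms. This holds whether two terminals share a covering link or use disjoint links, because the intersection still requires at least as many failures as either event alone, plus at least one more unless the covering link sets coincide. So to first order the failure probability is $\sum_a m_a(\lambda\hold)^a$. Setting this equal to $\varepsilon$ gives the first claim.

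The main obstacle is this step. Strictly, terminals with identical cover sets give intersections of the same order, not higher. For that case we either accept that the sum is an upper bound, which is consistent with ``$\approx$''. Alternatively we note that the uniform-independence assumption of the statement already treats terminals' failures as separate events. We state the first-order approximation in that sense, namely dropping the $O(\lambda\hold)$ relative corrections and the overlap terms.

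For the ratio, we specialise to a uniform $\rcov$-cover, so $m_\rcov=m$ and $m_a=0$ otherwise. The relation becomes $m(\lambda\hold_\rcov)^\rcov=\varepsilon$, so $\hold_\rcov=\lambda^{-1}(\varepsilon/m)^{1/\rcov}$. In particular $\hold_1=\lambda^{-1}\varepsilon/m$. Dividing gives
\[
\frac{\hold_\rcov}{\hold_1}=\left(\frac{\varepsilon}{m}\right)^{1/\rcov-1}=\left(\frac{m}{\varepsilon}\right)^{1-1/\rcov}.
\]
Finally, we check consistency of the regime. Since $\varepsilon/m\ll1$, both $\lambda\hold_1$ and $\lambda\hold_\rcov=(\varepsilon/m)^{1/\rcov}$ are small, so the first-order expansion applies to both. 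The ratio is computed within the same approximation, which keeps the derivation self-consistent.
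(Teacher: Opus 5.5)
Your route is the paper's: per-terminal failure probability $p^{a_y}$ with $p=1-e^{-\lambda\hold}$, the union bound above, Bonferroni below, then specialisation to the uniform cover. The problem is the step you call the main obstacle. It rests on a misreading, and you leave it hedged between two resolutions, one of which fails.

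A covering link is the edge $w$--$y$ between a relay and one particular terminal. Two distinct terminals therefore never share a covering link, even when they are reached by exactly the same relays. For example, $y$ and $y'$ both covered by $\{w_1,w_2\}$ have four distinct links $w_1y,w_2y,w_1y',w_2y'$. So your claim that terminals with identical cover sets give intersections of the same order is false under the stated model. Each event $A_y$ (all covering links of $y$ failed) depends on its own disjoint set of independently failing links. The $A_y$ are therefore independent, and $P(A_y\cap A_{y'})=p^{a_y+a_{y'}}$ exactly. The overlap case never arises; this is precisely the paper's observation that distinct terminals share no covering link.

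Your fallback (a) would not have rescued the argument. The union bound gives only $F(\hold)\le\varepsilon$, i.e.\ $\hold$ never overstates the $\varepsilon$-time; the paper notes this half needs no cross-terminal independence. But if $k$ terminals really did share one failure event, the sum would overcount by a factor $k$. That is an $O(1)$ relative error, not a first-order correction, so it cannot justify $\approx$. Fallback (b) reaches the right conclusion, but it is a consequence of link disjointness plus the hypothesis, not an additional modelling convention.

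With independence established, the paper also makes the lower bound quantitative. It writes $F=1-\prod_y(1-p^{a_y})\ge S-\sum_{y<y'}p^{a_y}p^{a_{y'}}\ge S-S^2/2$ with $S=\sum_a m_a p^a$. Combined with $\lambda\hold(1-\lambda\hold/2)\le p\le\lambda\hold$, this gives $\varepsilon(1-\lambda\hold/2)^{\bar a}(1-\varepsilon/2)\le F(\hold)\le\varepsilon$, a relative error of $O(\bar a\lambda\hold+\varepsilon)$.

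Your remark that each pairwise term is of higher order is true per term, but there are about $m^2/2$ such terms. Bounding their sum by $S^2/2$ is what makes the error uniform in $m$. That matters here because the ratio formula is applied with $m/\varepsilon$ large. Your specialisation to the uniform cover, the ratio computation, and the small-$\lambda\hold$ consistency check are fine as written.
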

\begin{proof}
Fix the relay set at time $0$; a covering link of terminal $y$ joins $y$ to one of its $a_y$ relays. Each covering link has failed by time $\hold$ with probability $p=1-e^{-\lambda\hold}$, independently of all others. Terminal $y$ is uncovered by time $\hold$ if and only if all its covering links have failed, an event $A_y$ of probability $p^{a_y}$. Distinct terminals share no covering link, so the $A_y$ are independent and some terminal is uncovered by time $\hold$ with probability ${F(\hold)=1-\prod_y(1-p^{a_y})}$. Let $S=\sum_y p^{a_y}={\sum_a m_a p^a}$. The union bound gives $F\le S$; the Bonferroni inequality and independence give ${F\ge S-\sum_{y<y'}p^{a_y}p^{a_{y'}}}\ge S-S^2/2$. As ${x-x^2/2}\le{1-e^{-x}}\le x$ for $x\ge0$, ${\lambda\hold(1-\lambda\hold/2)}\le p\le\lambda\hold$. Let $\hold$ solve $\sum_a m_a(\lambda\hold)^a=\varepsilon$ and $\bar a=\max\{a:m_a>0\}$. Then $\varepsilon(1-\lambda\hold/2)^{\bar a}\le S\le\varepsilon$, and since $S-S^2/2$ increases on $[0,1]$,
\[
\varepsilon\Bigl(1-\frac{\lambda\hold}{2}\Bigr)^{\bar a}\Bigl(1-\frac{\varepsilon}{2}\Bigr)\le F(\hold)\le\varepsilon .
\]
Hence $F(\hold)=\varepsilon\bigl(1-O(\bar a\lambda\hold+\varepsilon)\bigr)$: to first order, $\hold$ is the time by which some terminal is uncovered with probability $\varepsilon$, and it never overstates that time. The upper bound needs no independence across terminals. For a uniform $\rcov$-cover, $m_\rcov=m$ gives $\lambda\hold_\rcov=(\varepsilon/m)^{1/\rcov}$; a 1-cover reaching every terminal once gives $\lambda\hold_1=\varepsilon/m$. Dividing, $\hold_\rcov/\hold_1=(\varepsilon/m)^{1/\rcov-1}=(m/\varepsilon)^{1-1/\rcov}$.
\end{proof}

Uniform redundancy need not exist. For terminal $y$, define its number of possible coverers
\[
\kappa(y)=|\{w\in\Nl(v):y\in\Nl(w)\}|,
\]
and cap achievable coverage at $\rcov_y=\min(\rcov,\kappa(y))$. A terminal with $\kappa(y)=1$ is \emph{pinned}: no relay budget can give it redundancy.

\begin{corollary}[Structural cap]
\label{cor:exchange-real}
For a capped $\rcov$-cover with $\rcov\ge2$, let $m_\pi>0$ be the number of pinned terminals. The first-order lifetime gain over a 1-cover is bounded by $m/m_\pi$; additional coverage elsewhere cannot remove this cap while pinned terminals remain.
\end{corollary}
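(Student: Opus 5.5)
We reduce the corollary to the first-order equation of Proposition~\ref{thm:lifetime}, $\sum_a m_a(\lambda\hold)^a\approx\varepsilon$, and compare the capped $\rcov$-cover with a 1-cover on the same terminal set. For the baseline 1-cover every terminal is reached once, so $\lambda\hold_1=\varepsilon/m$. For the capped cover the key observation is that pinned terminals have $\kappa(y)=1$, hence $\rcov_y=1$, so they contribute $m_1\ge m_\pi$. The plan is to show that this $a=1$ term alone already forces $\lambda\hold_\rcov$ to be small, whatever the other $m_a$ look like.

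\textbf{Step 1.} Since all terms $m_a(\lambda\hold)^a$ are nonnegative, the defining equation gives
\[
m_\pi\,\lambda\hold_\rcov\le m_1\,\lambda\hold_\rcov\le\sum_a m_a(\lambda\hold_\rcov)^a=\varepsilon ,
\]
so $\lambda\hold_\rcov\le\varepsilon/m_\pi$.

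\textbf{Step 2.} Dividing by $\lambda\hold_1=\varepsilon/m$ yields $\hold_\rcov/\hold_1\le m/m_\pi$. The hazard $\lambda$ and the tolerance $\varepsilon$ cancel, so the cap is purely structural.

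\textbf{Step 3.} The bound used nothing about $m_a$ for $a\ge2$. Adding relays, or raising the coverage of non-pinned terminals, only increases those terms, which can only lower $\hold_\rcov$. The cap is therefore insensitive to coverage elsewhere and persists as long as $m_\pi>0$. Equality is approached when the non-pinned terms are negligible: then $m_\pi\lambda\hold\approx\varepsilon$, which requires $m_\pi\gg m^{1/\rcov}\varepsilon^{1-1/\rcov}$. I would mention this condition only as a remark.

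\textbf{Main obstacle.} The delicate point is the meaning of ``first-order''. Proposition~\ref{thm:lifetime} defines $\hold$ through the approximate equation, and the question is whether the cap transfers to the true $\varepsilon$-quantile $F^{-1}(\varepsilon)$. I would handle this with the exact bounds from that proof. Each pinned terminal alone is uncovered with probability $p$, and the pinned events are independent. Hence $F\ge 1-(1-p)^{m_\pi}$, which gives $p(\hold_\rcov)\lesssim\varepsilon/m_\pi$ up to $O(\varepsilon)$ relative error. Inverting $p=1-e^{-\lambda\hold}$ to first order then gives the same cap with a $(1+O(\varepsilon+\lambda\hold))$ correction. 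I will state the corollary at first order and note this correction.
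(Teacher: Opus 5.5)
Your argument is the one the paper intends: it states the corollary without a separate proof and reads it off Proposition~\ref{thm:lifetime} exactly as you do. Pinned terminals are forced to $a_y=1$, which gives $m_\pi\lambda\hold_\rcov\le\varepsilon$ against $\lambda\hold_1=\varepsilon/m$.

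One sentence in Step~3 has the direction backwards. Raising a non-pinned terminal from $a$ to $a+1$ covers replaces $(\lambda\hold)^a$ by the smaller $(\lambda\hold)^{a+1}$, so it \emph{lowers} the sum and \emph{raises} $\hold_\rcov$ toward the cap, as your own equality remark implies. The cap persists because Step~1 holds for arbitrary nonnegative higher-order terms, not because extra coverage shortens lifetime.

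Your ``main obstacle'' also disappears. Since $1-p=e^{-\lambda\hold}$, the pinned terminals give $F\ge 1-e^{-m_\pi\lambda\hold}$, while the reference 1-cover has $F_1=1-e^{-m\lambda\hold}$. The true $\varepsilon$-quantiles therefore satisfy $\hold^*_\rcov/\hold^*_1\le m/m_\pi$ exactly, with no $(1+O(\varepsilon+\lambda\hold))$ correction.
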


The reference law is deliberately optimistic: if redundancy were uniformly available and covering-link losses independent, a second relay should buy roughly an order of magnitude of lifetime at the neighbourhood sizes observed here. The capped model predicts why this need not happen: a single pinned terminal restores a first-order failure mode that extra relays elsewhere cannot remove. The size of the gap between these two cases is therefore itself diagnostic. E1 tests whether mobility behaves closer to the uniform backup model or to the structural cap; E2 then asks whether the remaining delivery value of redundancy comes from tolerating an outdated topology belief rather than extending the physical lifetime of the relay set.

\section{Experimental Design}
\label{sec:setup}
\label{sec:selectors}

The experiments isolate three questions: E1 measures the intrinsic lifetime bought by redundancy on the true topology; E2 identifies what redundancy protects against under NHDP belief; E3 prices redundancy against fresher topology at matched delivery.

\paragraph{Trace and baseline}
We use the public Anglova Vignette-2 trace~\cite{suri2016realistic,suri2018anglova}: 157 vehicles over $7800$\,s at $1$\,Hz, with staging, road-march, and assembly phases over the same terrain. Links use a $120$\,dB pathloss threshold; the largest component contains $94.7\,\%$ of nodes on average (mean degree 31). Unless stated otherwise, results use the road march, where mobility is highest. The first $10$\,s are discarded. Baseline \gls{nhdp} sends HELLOs every $2$\,s at staggered phases and expires entries after $6$\,s; floods run on the true graph with \gls{rfc}~7181 forwarding~\cite{clausen2014olsrv2}.

\paragraph{Relay rules}
All selectors use the same two-hop view and capped coverage $\min(\rcov,\kappa)$. The greedy selector applies weighted-set-cover cost effectiveness~\cite{chvatal1979greedy}; the exact selector solves the minimum-cost capped cover with \gls{cpsat}. A robustness arm uses the \gls{rfc}~7181 Appendix-B ordering with the \gls{rfc}~3626 coverage extension. A node re-evaluates every $T$ seconds and otherwise holds its incumbent; at evaluation it keeps a still-valid set or re-selects a broken one. $T=1$\,s is the conformant point on the 1-Hz trace; larger $T$ deliberately measures slack. Blind flooding (all reached nodes forward) is a transmission reference.

\paragraph{E1: cover lifetime}
E1 isolates the intrinsic temporal value of redundancy from topology-belief error. Every $30$\,s, a fresh capped $\rcov$-cover is constructed on the true topology and then frozen while the topology evolves. The outcome is time until some current terminal loses all relays; departing terminals and lifetimes beyond $600$\,s are censored. Varying only $\rcov\in\{1,2,3\}$ therefore asks: \emph{if the decision were initially correct, how much longer would an additional relay keep it valid under mobility?}

\paragraph{E2: belief and delivery}
E2 reintroduces the information mismatch. Selectors run for $\rcov\in\{1,2,3\}$ and $T=1\ldots300$\,s under \gls{nhdp} belief and perfect knowledge; every node originates a flood every $10$\,s, and delivery is the fraction of the source's true component reached. Link and membership oracles selectively correct outdated belief while retaining relay-flag delay. The distinguishing prediction is simple: if redundancy mainly protects against physical relay loss, its gain should remain when the topology view is corrected; if it mainly protects against outdated information, that gain should collapse. HELLO loss, link hysteresis, and a denser $130$\,dB graph test sensitivity. \gls{smf}/\gls{ecds}~\cite[App.~A]{macker2012smf} with \gls{rfc}~5614 backup relays~\cite{ogier2009mdr} tests a self-elected relay rule. A collision arm adds up to $0.5$\,s forwarding jitter and loses a reception when another neighbour transmits within one packet airtime, with or without carrier sense; floods are simulated in isolation.

\paragraph{E3: price}
E3 treats the two mechanisms exposed by E2 as substitutes for a common reliability target. Redundancy pays additional forwarding cost on every application flood; fresher topology pays mainly additional periodic control cost. We therefore compare configurations at matched delivery and vary application flood rate to identify where the lower-overhead choice changes. Costs are transmitted bytes per node-minute: periodic HELLO and \gls{tc} traffic plus application-flood transmissions. The sensitivity analysis spans 48 accounting variants (message layout, \gls{ip} headers, routing-\gls{mpr} choice, and relay-change signalling); responsive signalling is charged but not credited with faster propagation. Selectors are deterministic; 95\,\% \glspl{ci} use moving-block bootstrap over the timestep series (300-s blocks, paired indices; 500--2000 resamples), with cover-lifetime intervals bootstrapped over anchors~\cite{kunsch1989jackknife}.

The experiments thus form a progression rather than three benchmarks: E1 establishes the effect, E2 identifies its mechanism, and E3 converts that mechanism into a network-design decision at matched reliability.

\section{Results: The Redundancy--Freshness Exchange}
\label{sec:experiments}

The results answer the three questions in order: how much time redundancy buys, what its delivery gain protects against, and when redundancy is cheaper than fresher topology.

\subsection{E1: How much time does redundancy buy?}
\label{sec:exp-lifetime}

\begin{figure*}[t]
\centering
\includegraphics[width=\linewidth]{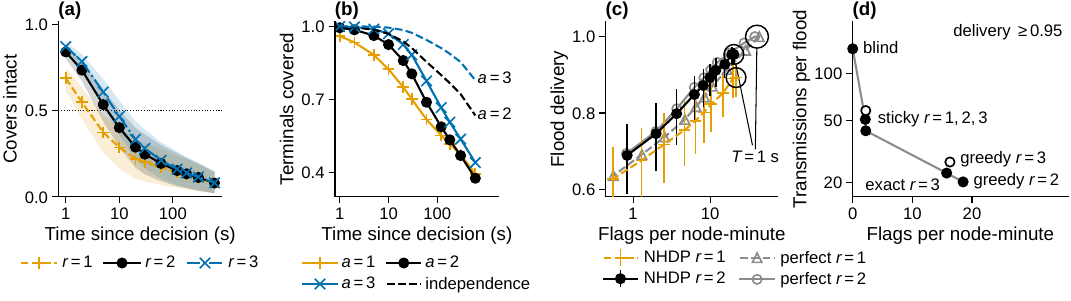}
\caption{Lifetime and delivery exchange on the road march. (a) Survival of fresh capped covers. (b) Per-terminal survival versus independent-cover prediction. (c) Delivery versus relay-flag churn under \gls{nhdp} and perfect knowledge; rings mark $T=1$\,s. (d) Configurations reaching delivery $\ge0.95$; blind means all reached nodes forward. Bars/bands: 95\,\% \glspl{ci}.}
\label{fig:exchange}
\end{figure*}

\paragraph{Independence is optimistic}
For terminals with $\kappa\ge3$, a second relay follows Proposition~\ref{thm:lifetime} within $13\,\%$ for small failure probabilities, but losses are correlated: at $60$\,s the independence prediction overstates survival by $0.17$ for both two and three relays (Fig.~\ref{fig:exchange}b).

\paragraph{Neighbourhood lifetime gains only $2.3\times$}
On the road march, $43.6\,\%$ of terminals are pinned and a uniform 2-cover exists in only $6.5\,\%$ of neighbourhoods. Median cover lifetime rises from $2.6$ to $6.0$\,s, a factor $2.3$ (95\,\% \gls{ci} $2.2$--$2.5$), versus roughly $11\times$ from the uniform reference model (Fig.~\ref{fig:exchange}a). Pinning and newly arriving terminals jointly explain much of the gap: arrivals cause $40$--$53\,\%$ of first failures, and censoring both causes raises the factor to $6.7$ (95\,\% \gls{ci} $3.7$--$18.7$). A third relay adds only $1.4\times$ over the 2-cover, unresolved by the \glspl{ci}. Thus redundancy buys time, but structural coverability and topology arrivals prevent the order-of-magnitude gain of uniform independent backup.

\paragraph{The structural cap is not just link flicker}
The coverability distribution in Fig.~\ref{fig:price}b reinforces this interpretation. The staging phase is the densest of the three mission phases yet has the largest pinned share, so simply adding neighbours does not guarantee additional relay choices. Among pinned terminal pairs that remain two hops apart one second later, $92.5\,\%$ remain pinned. Pinning is therefore mostly a persistent geometric constraint at this timescale rather than a one-snapshot threshold effect. Consistently, the measured lifetime factor falls from $3.9\times$ in neighbourhoods without a pinned terminal to $1.2\times$ where most terminals are pinned, although those strata also differ in neighbourhood size. The evidence supports the structural mechanism without claiming that pinning alone explains every failure.

\subsection{E2: What does redundancy protect against?}
\label{sec:exp-isoquant}

\paragraph{Single-cover re-selection reaches a belief-limited ceiling}
With perfect knowledge, a greedy cover re-evaluated each second delivers every flood (Fig.~\ref{fig:exchange}c). Under \gls{nhdp} belief, the conformant $T=1$\,s single cover reaches only $0.893$ (95\,\% \gls{ci} $0.845$--$0.928$); the exact single cover reaches $0.865$. A capped second cover adds $+0.061$ (paired 95\,\% \gls{ci} $0.041$--$0.088$) to greedy delivery and $+0.078$ to exact. Re-selecting faster therefore cannot remove the ceiling created by the information on which selection operates.

\begin{figure}[t]
\centering
\includegraphics[width=\columnwidth]{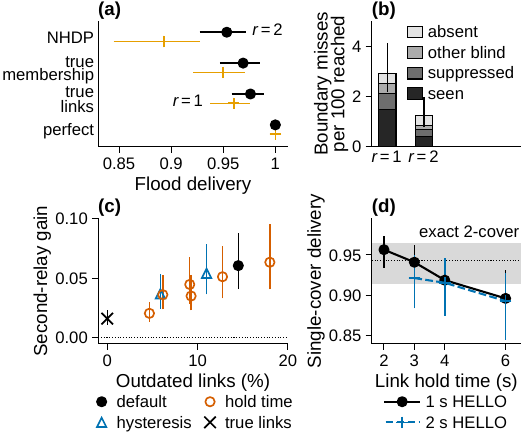}
\caption{Redundancy mostly insures against outdated topology belief (road march, greedy, $T=1$\,s). (a) Delivery as belief is corrected. (b) Boundary misses by cause. (c) Second-relay gain versus outdated covering links. (d) Single-cover delivery versus hold time; band: default exact 2-cover.}
\label{fig:belief}
\end{figure}

\paragraph{The second relay mostly insures outdated links}
On the road march, $14.5\,\%$ of believed covering links have already broken. Correcting the belief removes most of both the single-cover deficit and the redundancy gain (Fig.~\ref{fig:belief}a): with current links, single-cover delivery rises to $0.960$ and the second-relay gain falls from $+0.061$ to $+0.016$. Boundary misses show that the second relay mainly removes terminals believed covered over broken links, not terminals absent from the belief (Fig.~\ref{fig:belief}b). Across hold-time and hysteresis interventions, the gain increases with the share of outdated links (Fig.~\ref{fig:belief}c). The residual $+0.016$ may reflect other delays; relay-flag lag is one candidate. The central comparison is therefore redundancy versus topology freshness, not redundancy versus re-selection.

The oracle sequence also distinguishes outdated links from missing topology. Correcting current links removes most of the benefit, while terminals absent from the belief remain unreachable by either one or two relays because no selector can cover a terminal it does not know exists. The membership oracle behaves similarly because it also removes part of the outdated-link error. Thus the second relay is not creating new topology knowledge; it is increasing the chance that at least one path assumed by the existing belief still exists when the flood occurs.

\subsection{E3: When is redundancy cheaper than freshness?}
\label{sec:exp-price}

Reducing H\_HOLD\_TIME partially substitutes for redundancy (Fig.~\ref{fig:belief}d), but the timer cannot be shortened arbitrarily when HELLOs may be lost: \gls{rfc}~6130 recommends a hold time significantly larger than the refresh interval~\cite{clausen2011nhdp}. With 2-s HELLOs, a 3--4-s hold recovers less than half of the second relay's gain; the control-byte increase is only a few percent without HELLO loss but grows to roughly one-quarter to one-half under $10\,\%$ loss. With 1-s HELLOs and a 3-s hold, preserving the default 3:1 hold/HELLO ratio, a single cover matches the default exact 2-cover's delivery at $1.32\times$ its control bytes. Faster HELLOs alone do not help because outdated entries persist until their hold timers expire. Freshness is therefore a joint timer choice rather than simply ``send HELLOs faster.

\begin{figure}[t]
\centering
\includegraphics[width=\columnwidth]{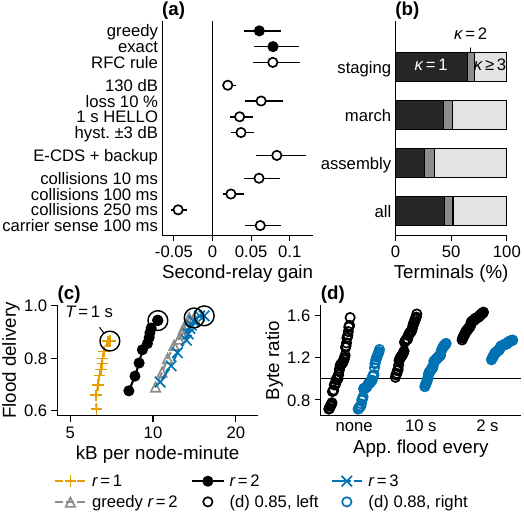}
\caption{Robustness and price. (a) Second-relay gain across selector, belief, \gls{ecds}, and collision variants. (b) Terminal coverability $\kappa$. (c) Delivery versus control bytes. (d) Twofold/single-cover byte ratio at matched delivery under 48 accounting variants and three application loads.}
\label{fig:price}
\end{figure}

\paragraph{The mechanism survives the tested variants}
The second-relay gain remains positive across all tested belief variants and shrinks as the share of outdated links falls (Fig.~\ref{fig:price}a). With $\pm3$\,dB hysteresis, outdated believed links fall from $14.5$ to $5.9\,\%$ and the gain from $+0.061$ to $+0.037$. The mechanism is not specific to \gls{olsrv2} in these experiments: \gls{ecds} delivers $0.895$ under \gls{nhdp} belief and its backup layer adds $+0.083$ (95\,\% \gls{ci} $0.057$--$0.120$), but no gain on the true graph. Under our isolated-flood collision model, the gain is essentially unchanged at $160$\,kbit/s and above; without carrier sense it falls to $+0.024$ at $100$\,ms airtime and reverses at $250$\,ms.

\paragraph{Application load sets the crossover}
At $T=1$\,s, twofold coverage roughly doubles flood transmissions: exact uses $16.2$ versus $8.3$ transmissions per flood (greedy: $20.3$ versus $10.6$). This creates different scaling laws for the two remedies. The redundant backbone pays almost nothing when no application flood is sent, but pays its extra forwarders again for every subsequent flood. A fresher topology configuration instead raises periodic control traffic whether or not the application floods, so its marginal cost per additional flood is small. Application load should therefore create a crossover even when mobility and the target delivery are fixed.

That crossover appears across the accounting variants in Fig.~\ref{fig:price}d. At the conformant comparison, the default exact 2-cover crosses a single cover with 1-s HELLOs and a 3-s hold at $3.7$ floods of 200\,B per node-minute (95\,\% \gls{ci} $2.9$--$4.8$), about $100$\,bit/s per node. Below this load, twofold coverage has lower total overhead at matched delivery because its forwarding premium is rarely paid; above it, the fixed cost of fresher topology is amortized across enough floods to become cheaper. The numerical crossover belongs to this trace and accounting, but the opposing fixed-versus-per-flood cost structure explains why a load-dependent decision arises.
The accounting sensitivity shows that the decision is not an artifact of one packet layout. With no application floods, re-selection/freshness is typically cheaper when relay changes ride on periodic HELLOs, while more expensive responsive signalling can make the held redundant backbone competitive. Once application traffic is included, the repeated forwarding cost dominates: at one 200-B flood per node every 10 s, the held 2-cover is more expensive in 89 of the 96 matched-delivery accounting cells, and at one every 2 s it is more expensive in all of them. The precise boundary moves, but the direction of the load effect is stable.

Together, E1--E3 establish the exchange: structural topology caps the lifetime benefit, outdated topology belief explains most of the delivery gain, and application load determines whether tolerating outdated belief with redundancy or correcting it with fresher updates costs less network capacity.

\paragraph{Implication for \gls{olsrv2}}
The measured exchange suggests a simple implementation rule rather than a new message format. Under light application flooding, request capped twofold coverage, i.e., cover each terminal $y$ through $\min(2,\kappa(y))$ neighbours; the existing neighbour lists already reveal $\kappa(y)$, and additional relay flags can ride in the periodic HELLO. Under heavier flooding, keep single coverage and spend the budget on a fresher neighbourhood view, represented here by the 1-s HELLO/3-s hold configuration. The rule is deliberately load-dependent: it does not claim that twofold coverage or faster refresh is universally preferable, only that the cheaper way to reach the same reliability changes with how often the forwarding backbone is exercised.

\paragraph{Transfer beyond the measured instance}
The \gls{olsrv2} timer and coverage settings are protocol-specific, but the quantities exposed by the experiments provide a more general way to analyse local-view relay schemes. A protocol designer can ask three questions in the same order: how many alternative relays each destination structurally admits, how outdated the local information is when forwarding decisions are exercised, and whether robustness is paid periodically in control traffic or repeatedly per application flood. \gls{ecds} reproduces the belief-insurance mechanism without \gls{olsrv2}'s selector/flag semantics, suggesting that the decomposition is not tied to one forwarding rule. We do not claim that another protocol shares the same crossover; rather, the result identifies the measurements needed to locate its own redundancy--freshness operating point.

\paragraph{Scope and limitations}
The collision arm models contention only among transmissions of the same flood; concurrent floods and interference from HELLO or \gls{tc} traffic are outside the model. \gls{ecds} and the belief interventions support the proposed mechanism beyond one selector, but do not establish it for every flooding protocol or mobility regime. What the experiments support more broadly is the decomposition itself: redundancy is constrained by local coverability, can mask outdated topology information, and incurs a per-flood cost that competes with the largely periodic cost of maintaining a fresher view.

\section{Conclusion}
\label{sec:conclusion}

Relays chosen on HELLO-derived belief expire in two ways: the topology moves, and the belief lags it. Re-selection repairs the first; only redundancy or a fresher belief repairs the second. On this trace, a second relay extends median cover lifetime by $2.3\times$, not by an order of magnitude: $43.6\,\%$ of two-hop neighbours admit no second relay, and newly appearing two-hop neighbours break covers too. A single cover re-selected as fast as the trace allows tops out at delivery $0.893$, while capped twofold coverage adds $+0.061$ for about twice the forwarding transmissions. With current link state, that gain falls to $+0.016$, showing that redundancy mainly insures against outdated topology belief.

Two parts of this explanation are structural: redundancy is capped wherever a two-hop neighbour has no alternative relay, and a neighbourhood belief retained beyond its observation interval can become outdated under mobility. In our experiments the same redundancy effect also appears with \gls{smf}'s \gls{ecds}, so it is not specific to \gls{olsrv2} forwarding. It also survives intra-flood collisions at $160\,\mathrm{kbit/s}$ and above under the collision model evaluated here. The numerical gains and crossover remain specific to the measured trace and protocol configuration.

\emph{For \gls{olsrv2}, the resulting rule is load-dependent.} Where an application requires more delivery than single coverage provides, capped twofold coverage and fresher topology information offer two conformant ways to obtain it. Under light application flooding, below about $3.7$ floods of 200\,B per node-minute ($\approx100\,\mathrm{bit/s}$ per node), request $\min(2,\kappa(y))$ flooding relays for each two-hop neighbour $y$. This corresponds to \acrshort{olsr}v1's \texttt{MPR\_COVERAGE}$\,=2$ semantics, requires no change to the \gls{olsrv2} message format, and incurs its additional cost primarily when floods are forwarded. Above that load, use single coverage with a 1\,s HELLO interval and 3\,s hold time instead: the additional control cost is then cheaper than paying the redundant forwarding cost on every flood. Relay coverage should therefore be selected jointly with topology freshness and application load rather than fixed as a static robustness parameter.

\bibliographystyle{IEEEtran}
\bibliography{references}

\end{document}